\documentclass[11pt]{article}

\usepackage[margin=1in]{geometry}
\usepackage{amsmath,amssymb,amsthm,bm}
\usepackage{booktabs}
\usepackage{graphicx}
\usepackage{hyperref}
\usepackage[numbers]{natbib}
\usepackage{float}
\usepackage{setspace}
\usepackage[multiple]{footmisc}
\usepackage[left]{lineno}

\newtheorem{theorem}{Theorem}

\title{ Model Validation of Agentic AI Systems: \\ A POMDP-Based Framework for Belief-State, Forecast, and Policy Validation } \author{ Matthew Dixon\\ Quiota LLC } \date{\today} 
\begin{document}
\maketitle 
\begin{abstract}
Agentic artificial intelligence systems introduce a new class of model risk. Unlike traditional predictive models, autonomous agents continuously acquire information, form beliefs regarding latent states of the environment, generate forecasts, select actions, and adapt their behavior over time. Existing validation methodologies focus primarily on predictive accuracy and therefore provide limited insight into the quality of the underlying decision process. This paper proposes a model validation framework for agentic AI based on Partially Observable Markov Decision Processes (POMDPs). The framework decomposes autonomous decision making into information, beliefs, forecasts, actions, and utility, allowing each component to be validated independently. Large language models (LLMs) are formalized as approximate Bayesian filtering operators, and a model-risk taxonomy is developed encompassing state-space, filtering, forecast, policy, utility-specification, and parameter risks.

The model risk validation methodology is demonstrated through a portfolio-management case study in which an agent infers latent market regimes from market and macroeconomic information, generates belief-conditioned forecasts, and constructs portfolios using a Black--Litterman framework. Empirical validation combines performance analysis, belief calibration diagnostics, coverage tests, ablation studies, and parameter-sensitivity analysis. The results indicate that latent-state inference contributes independently to decision quality and that the principal conclusions remain robust across a broad range of parameter values. The principal contribution of the paper is a practical framework for extending established model risk management concepts to autonomous AI systems and providing a rigorous foundation for their validation, governance, and monitoring.

\end{abstract}

\section{Introduction}
Modern agentic artificial intelligence systems differ fundamentally from traditional predictive models. Rather than generating a single forecast or classification, autonomous agents continuously acquire information, form beliefs regarding latent states of the environment, generate forecasts, select actions, and adapt their behavior in response to feedback. Recent advances in large language models, foundation models, and autonomous agents have accelerated the deployment of such systems across finance, cybersecurity, enterprise operations, and scientific research \citep{bommasani2021,bai2022constitutional,weidinger2022,liang2022holistic,park2023,yao2023react,shinn2023reflexion,xi2023,wang2024agents}. As decision making becomes increasingly delegated to AI systems, the challenge is no longer merely one of predictive accuracy. Instead, the central question becomes whether an autonomous system can be trusted to make decisions under uncertainty.

This challenge naturally raises a model validation problem. Traditional machine-learning validation focuses primarily on predictive performance through metrics such as classification accuracy, mean squared error, perplexity, or benchmark task-completion rates. Such measures are appropriate when the objective is prediction. They become considerably less informative when the objective is sequential decision making. An autonomous system may produce accurate forecasts while making poor decisions, or alternatively achieve favorable outcomes despite possessing poorly calibrated beliefs. Consequently, validation must extend beyond outputs and address the quality of the underlying decision process itself.

The broader model-risk literature has long recognized that model performance cannot be evaluated solely through predictive accuracy. Regulatory guidance such as SR 11-7 and BCBS 239 emphasizes conceptual soundness, ongoing monitoring, governance, challenge processes, and the management of model uncertainty as essential components of model validation \citep{sr117,bcbs239}. Similarly, recent work in stress testing, scenario analysis, probability-of-default validation, and model-risk quantification has highlighted the importance of robustness, calibration, and fitness for purpose \citep{skoglund2019,zhang2019,rubtsov2016}. These concerns become even more important in autonomous systems because errors may arise not only from inaccurate forecasts but also from incorrect beliefs, flawed decision policies, inappropriate objectives, or deficiencies in information processing.

Many of these challenges have already been studied in stochastic control, Bayesian decision theory, and quantitative finance. Dynamic programming, Markov decision processes, and reinforcement learning provide mathematical frameworks for sequential decision making under uncertainty \citep{bellman1957,bertsekas1995,puterman1994,sutton2018,szepesvari2010}. Partially Observable Markov Decision Processes (POMDPs) extend these ideas to environments in which the true state of the system cannot be directly observed \citep{cassandra1998,kaelbling1998}. Bayesian decision theory, filtering theory, and hidden-state models provide complementary frameworks for representing uncertainty through posterior probability distributions and updating beliefs as information arrives \citep{berger1985,bernardo2000,jaynes2003,elliott1995,hamilton1989}. Information theory further provides quantitative measures of uncertainty, entropy, and information acquisition \citep{shannon1948,cover2006}. Collectively, these literatures suggest that autonomous AI systems should be viewed not as static predictive models but as partially observable decision processes.

This perspective is particularly familiar in quantitative finance. Portfolio managers, traders, and risk managers routinely make decisions without observing the true future state of the economy. Instead, they maintain probabilistic beliefs regarding latent economic conditions and act according to those beliefs. Modern portfolio theory, equilibrium asset pricing, active portfolio management, and Bayesian portfolio construction all operate within this decision-theoretic paradigm \citep{markowitz1952,sharpe1964,lintner1965,mossin1966,grinold2000,blacklitterman1992,dixon2020}. The resulting environment provides a natural setting in which to study the validation of autonomous agents.

This paper argues that agentic AI should be validated as a partially observable stochastic control system. We propose a model validation framework based on Partially Observable Markov Decision Processes in which posterior beliefs, forecasts, actions, and realized utility become explicit objects of validation. Large language models are interpreted as approximate Bayesian filtering operators acting on an information filtration, thereby transforming observations into probabilistic beliefs regarding latent states of the environment. The framework combines concepts from model risk management, calibration theory, stochastic control, Bayesian inference, and quantitative finance to create a practical methodology for validating autonomous decision systems.

The principal contribution is not a new agent architecture or portfolio strategy. Rather, the paper develops a layered validation framework that decomposes autonomous decision making into observations, beliefs, forecasts, actions, and utility. This decomposition permits independent validation of each stage of the decision process and allows failures of inference to be distinguished from failures of decision making. The framework is subsequently demonstrated through a portfolio-management case study using performance analysis, calibration diagnostics, belief-state coverage tests, ablation studies, and parameter-sensitivity analysis.

The remainder of the paper is organized as follows. Section 2 develops a POMDP representation of agentic AI and establishes the role of latent states, posterior beliefs, and utility functions. Section 3 introduces the proposed validation architecture. Section 4 discusses model risk. Section 5 presents a portfolio-management case study. Section 6 reports empirical validation results and finally, Section 7 concludes.

\section{POMDP Representation of Agentic AI}

The mathematical framework developed in this paper draws primarily upon the theory of stochastic control, Bayesian decision theory, partially observable Markov decision processes, filtering theory, and reinforcement learning \citep{bellman1957,bertsekas1995,puterman1994,kaelbling1998,cassandra1998,berger1985,bernardo2000,jaynes2003,elliott1995,sutton2018}. The central observation is that an autonomous AI system operates under uncertainty and therefore faces a decision problem rather than a purely predictive problem. At each point in time, the agent must act using incomplete information regarding the true state of the environment.

The standard POMDP framework provides a natural representation of this setting. The environment is described by a latent state process $S_t\in\mathcal S$, where $\mathcal S=\{s_1,\ldots,s_K\}$ denotes a finite state space. The agent receives observations $O_t\in\mathcal O$ and selects actions $A_t\in\mathcal A$. Information available to the agent evolves according to a filtration $\mathbb F=\{\mathcal F_t\}_{t\ge0}$ defined on a probability space $(\Omega,\mathcal F,\mathbb P)$.

The state process is assumed to satisfy the Markov property

\begin{equation}
\mathbb P(S_{t+1}\in B \mid S_t,S_{t-1},\ldots,S_0)
=
\mathbb P(S_{t+1}\in B \mid S_t)
\end{equation}
for all measurable sets $B\subseteq\mathcal S$. State transitions are governed by a transition kernel $T(s,s')=\mathbb P(S_{t+1}=s' \mid S_t=s)$, while observations are generated according to an observation kernel $Z(o,s)=\mathbb P(O_t=o\mid S_t=s)$.

The agent's objective is to maximize expected discounted utility. Given a policy $\pi$, the associated value function is

\begin{equation}
V^\pi
=
\mathbb E^\pi
\left[
\sum_{t=0}^{T}
\gamma^t
R(S_t,A_t)
\right],
\end{equation}
where $\gamma\in(0,1]$ denotes a discount factor. The quantity $R(S_t,A_t)$ denotes the one-period reward obtained when action $A_t$ is taken while the environment is in latent state $S_t$. The reward function encodes the objective of the decision maker and therefore provides the link between observations, decisions, and utility. Depending on the application, the reward may represent financial return, operational efficiency, user satisfaction, task completion, risk reduction, regulatory compliance, or any other measurable objective. The reward function is assumed to be measurable with respect to the state-action pair and may be either deterministic or stochastic. The optimal policy is therefore

\begin{equation}
\pi^\star
=
\arg\max_{\pi}
V^\pi.
\end{equation}
Unlike traditional machine-learning systems, the objective is therefore not prediction accuracy but expected utility.

\subsection{Belief States}

Since the latent state is not directly observable, the agent must maintain a probabilistic representation of uncertainty. Following the standard POMDP literature \citep{kaelbling1998,cassandra1998}, the posterior belief state is defined by

\begin{equation}
b_t(s)
=
\mathbb P
\left(
S_t=s
\mid
\mathcal F_t
\right).
\end{equation}
The belief vector $b_t=(b_t(s_1),\ldots,b_t(s_K))$ lies in the probability simplex

\begin{equation}
\Delta^{K-1}
=
\left\{
x\in\mathbb R^K:
x_i\ge0,\;
\sum_{i=1}^{K}x_i=1
\right\}.
\end{equation}
The belief state therefore provides a complete probabilistic summary of uncertainty regarding the hidden environment. Upon observing $O_{t+1}=o$, the posterior distribution evolves according to the Bayesian filtering recursion

\begin{equation}
b_{t+1}(s')
=
\frac{
Z(o,s')
\sum_{s\in\mathcal S}
T(s,s')
b_t(s)
}
{
\sum_{\tilde s\in\mathcal S}
Z(o,\tilde s)
\sum_{s\in\mathcal S}
T(s,\tilde s)
b_t(s)
}.
\end{equation}
This recursion forms the basis of the filtering problem and provides the connection between observations and posterior beliefs.

\subsection{Belief-State Sufficiency}

A well known fundamental result in POMDP theory is that the posterior belief state is a sufficient statistic for optimal decision making.

\begin{theorem}[Belief-State Sufficiency {\citep{puterman1994,kaelbling1998,cassandra1998}}]
Let $ H_t = (O_0,A_0,\ldots,O_t)$ denote the complete observation history. Then there exists an optimal policy $\pi^\star$ such that
$A_t
=
\pi^\star(b_t)$.
Equivalently, the optimal policy depends upon the history $H_t$ only through the posterior belief state $b_t$.
\end{theorem}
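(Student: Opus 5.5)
The plan is to reduce the partially observed problem to a fully observed Markov decision process whose state is the belief $b_t$, and then invoke standard dynamic programming on that belief MDP. I would work with the finite-horizon criterion $V^\pi$ defined above, with finite $\mathcal S$, and assume finite $\mathcal O$ and $\mathcal A$ so that every maximum below is attained. The latter can be relaxed under a measurable-selection condition. Admissible policies are arbitrary (possibly randomized) history-dependent rules $A_t=\pi_t(H_t)$.

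\textbf{Step 1: the reward depends on the history only through the belief.} By the tower property and the definition of $b_t$,
\[
\mathbb E^\pi[R(S_t,A_t)\mid H_t,A_t]=\sum_{s\in\mathcal S} b_t(s)R(s,A_t)=:\bar R(b_t,A_t).
\]
Here I use that $A_t$ is conditionally independent of $S_t$ given $H_t$, since the action is chosen from the history only. Hence $V^\pi=\mathbb E^\pi\big[\sum_t\gamma^t\bar R(b_t,A_t)\big]$.

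\textbf{Step 2: the belief is a controlled Markov chain.} The filtering recursion displayed above, with the action included in $T$ if transitions are action-dependent, gives $b_{t+1}=\Phi(b_t,A_t,O_{t+1})$ for a deterministic map $\Phi$. The predictive law of the next observation is
\[
\mathbb P(O_{t+1}=o\mid H_t,A_t)=\sum_{s'}Z(o,s')\sum_s T(s,s')b_t(s).
\]
It depends on the history only through $(b_t,A_t)$. Therefore $\mathbb P(b_{t+1}\in\cdot\mid H_t,A_t)=Q(\cdot\mid b_t,A_t)$ for a kernel $Q$ on $\Delta^{K-1}$. I expect this to be the main obstacle. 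The filter must be verified to hold under an arbitrary history-dependent policy, not only a fixed one. This requires showing that $\mathbb P^\pi(S_t=s\mid H_t)$ does not depend on $\pi$ given $H_t$. I would prove it by induction on $t$: write the joint law of $(S_{0:t},O_{0:t},A_{0:t-1})$ as a product of the kernels $T$ and $Z$ and the policy factors $\pi_u(A_u\mid H_u)$. The policy factors are functions of $H_t$ alone, so they cancel in the Bayes ratio. What remains is exactly the recursion above.

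\textbf{Step 3: dynamic programming on the simplex.} Define backward recursively $W_{T+1}\equiv0$ and
\[
W_t(b)=\max_{a\in\mathcal A}\Big\{\bar R(b,a)+\gamma\int W_{t+1}(b')\,Q(db'\mid b,a)\Big\},
\]
and let $\pi^\star_t(b)$ be a maximizer. If $\mathcal A$ is not finite, a measurable selector is needed here. Using Steps 1 and 2, a backward induction shows that for every history-dependent $\pi$,
\[
\mathbb E^\pi\Big[\sum_{u\ge t}\gamma^{u-t}\bar R(b_u,A_u)\,\Big|\,H_t\Big]\le W_t(b_t).
\]
At each stage the conditional expectation of the continuation depends on $(H_t,A_t)$ only via $(b_t,A_t)$, so it is bounded by the maximum over actions. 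Equality holds for $A_t=\pi^\star_t(b_t)$. Taking expectations at $t=0$ gives $V^{\pi^\star}=\max_\pi V^\pi$. Hence an optimal policy exists that depends on $H_t$ only through $b_t$. It is Markov and, in the finite-horizon case, time-indexed. For the discounted infinite-horizon case with $\gamma<1$, I would replace the backward recursion by the Bellman contraction on bounded functions on $\Delta^{K-1}$ to obtain a stationary $\pi^\star(b_t)$.
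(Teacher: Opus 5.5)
Your proposal is correct and follows the same route as the paper's proof: you show that the predictive law of future observations, and hence the expected reward, depends on the history only through $b_t$, then write the Bellman recursion on the belief simplex and take a belief-based maximizer. Your version is more careful than the paper's sketch in three respects: you verify that the filter does not depend on the history-dependent policy, you give an explicit backward-induction comparison against arbitrary history-dependent policies (the paper only asserts that a value function of $b$ yields a policy of $b$), and you note that for the finite-horizon objective $\sum_{t=0}^{T}$ the optimal rule is in general time-indexed, $\pi^\star_t(b_t)$, whereas the paper writes a stationary Bellman equation.
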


See Appendix A for proof. This theorem provides the principal motivation for belief-state validation. If optimal decisions depend only on posterior beliefs, then the quality of those beliefs becomes a primary object of validation.

\subsection{Value of Information}

Information plays a central role in autonomous decision systems. Let $\mathcal I_1$ and $\mathcal I_2$ denote two information structures satisfying $\mathcal I_1\subseteq\mathcal I_2$. Define the associated value functions by

\begin{equation}
V(\mathcal I_j)
=
\sup_{\pi\in\Pi(\mathcal I_j)}
\mathbb E^\pi
\left[
\sum_{t=0}^{T}
\gamma^tR_t
\right],
\qquad
j=1,2,
\end{equation}
where $\Pi(\mathcal I_j)$ denotes the set of admissible policies adapted to information structure $\mathcal I_j$.

\begin{theorem}[Non-Negativity of Information Value]
If $\mathcal I_1\subseteq\mathcal I_2$, then

\begin{equation}
V(\mathcal I_2)
\ge
V(\mathcal I_1).
\end{equation}
Consequently,

\begin{equation}
VOI(\mathcal I_2,\mathcal I_1)
=
V(\mathcal I_2)-V(\mathcal I_1)
\ge0.
\end{equation}

\end{theorem}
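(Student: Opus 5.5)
The argument is a set-inclusion argument on the admissible policy classes. Everything rests on showing $\Pi(\mathcal I_1)\subseteq\Pi(\mathcal I_2)$, after which the inequality follows from monotonicity of the supremum. I interpret $\mathcal I_1\subseteq\mathcal I_2$ as an inclusion of filtrations, namely $\mathcal I_1=\{\mathcal G^1_t\}$ and $\mathcal I_2=\{\mathcal G^2_t\}$ with $\mathcal G^1_t\subseteq\mathcal G^2_t$ for every $t$. I take a policy to be admissible for $\mathcal I_j$ when each action $A_t$ is $\mathcal G^j_t$-measurable, allowing possibly randomized actions with randomization independent of the environment.

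\emph{Step 1 (policy inclusion).} Let $\pi\in\Pi(\mathcal I_1)$. For each $t$, $A_t$ is $\mathcal G^1_t$-measurable. Since $\mathcal G^1_t\subseteq\mathcal G^2_t$, every $\mathcal G^1_t$-measurable map is also $\mathcal G^2_t$-measurable. Hence $\pi$ is adapted to $\mathcal I_2$, so $\Pi(\mathcal I_1)\subseteq\Pi(\mathcal I_2)$. Operationally, an agent with the richer information can always ignore the extra information and imitate any coarser policy.

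\emph{Step 2 (same objective).} The state dynamics $T$, the observation kernel $Z$ and the reward $R$ do not depend on which information structure the agent uses. Therefore, for a fixed $\pi\in\Pi(\mathcal I_1)$, the induced joint law of $(S_t,A_t)_{t\le T}$, and hence $\mathbb E^\pi[\sum_{t=0}^T\gamma^tR_t]$, is the same whether $\pi$ is viewed as an element of $\Pi(\mathcal I_1)$ or of $\Pi(\mathcal I_2)$.

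\emph{Step 3 (monotone supremum).} A supremum over a larger set is at least as large as the supremum over a subset. So
\[
V(\mathcal I_2)=\sup_{\pi\in\Pi(\mathcal I_2)}\mathbb E^\pi\Big[\sum_{t=0}^T\gamma^tR_t\Big]\ \ge\ \sup_{\pi\in\Pi(\mathcal I_1)}\mathbb E^\pi\Big[\sum_{t=0}^T\gamma^tR_t\Big]=V(\mathcal I_1).
\]
Subtracting gives $VOI(\mathcal I_2,\mathcal I_1)\ge0$. If the values could be infinite, the inequality still holds in the extended reals. The difference is then well defined whenever $V(\mathcal I_1)$ is finite, which I will note as a mild integrability assumption, for example bounded $R$ with finite horizon $T$.

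The main obstacle is Step 2 rather than Step 3. The paper does not pin down precisely what an information structure or an admissible policy is, so the proof must fix these definitions. The point to get right is that a policy's expected reward is determined by the policy itself together with the environment kernels, and not by the information class it is labelled with. In particular, enlarging the information must not alter the environment, so the agent's extra observations must not feed back into $T$ or $Z$. I will state this as part of the setup and verify it via the standard Ionescu--Tulcea construction of the process law from $(T,Z,\pi)$.
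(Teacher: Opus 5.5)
Your proof is correct and follows the paper's argument: first the inclusion $\Pi(\mathcal I_1)\subseteq\Pi(\mathcal I_2)$, then monotonicity of the supremum of the same objective. Your Step 2 and your finiteness remark make explicit two points the paper leaves implicit: that the objective is ``the same objective function'' in both classes, and that the difference $V(\mathcal I_2)-V(\mathcal I_1)$ is well defined.
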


See Appendix A for proof. This theorem formalizes an important principle for information governance. Additional information cannot reduce the value attainable by an optimal decision maker. The relevant governance question is therefore not whether information is valuable, but whether its marginal value exceeds the associated costs, risks, and operational constraints.

\subsection{Implications for Validation}

The POMDP framework naturally decomposes autonomous decision making into a sequence of transformations,

\begin{equation}
\mathrm{Observations}
\rightarrow
\mathrm{Beliefs}
\rightarrow
\mathrm{Forecasts}
\rightarrow
\mathrm{Actions}
\rightarrow
\mathrm{Utility}.
\end{equation}

This decomposition is central to the validation framework developed in the remainder of the paper. Beliefs become the object of inference validation, forecasts become the object of predictive validation, actions become the object of policy validation, and utility becomes the object of performance validation. By separating these layers, failures of state estimation can be distinguished from failures of forecasting or decision making, thereby providing substantially greater transparency than validation based solely on final outputs.

The next section develops quantitative methodologies for validating each of these components.

\section{Validation Framework}

The central premise of this paper is that autonomous systems should be validated as sequential decision processes rather than predictive models. The POMDP framework introduced in Section 2 naturally decomposes agent behavior into a sequence of transformations,

\begin{equation}
\mathrm{Observations}
\rightarrow
\mathrm{Beliefs}
\rightarrow
\mathrm{Forecasts}
\rightarrow
\mathrm{Actions}
\rightarrow
\mathrm{Utility}.
\end{equation}

This decomposition provides the basis for a layered validation architecture in which beliefs, forecasts, policies, and utility are evaluated separately. Such separation is important because failures of state estimation, forecasting, decision making, and objective specification represent fundamentally different forms of model error.

\subsection{Belief-State Validation}

The posterior belief state is the primary object of inference validation. A useful belief process should be calibrated, meaning that events assigned probability \(p\) occur approximately \(p\) percent of the time in the long run. Calibration is evaluated using proper scoring rules. Let \(Y_t\) denote the realized state label and \(\widehat b_{t,k}\) the estimated probability of state \(k\). The multiclass Brier score is

\begin{equation}
BS
=
\frac{1}{T}
\sum_{t=1}^{T}
\sum_{k=1}^{K}
\left(
\widehat b_{t,k}
-
\mathbf 1_{\{Y_t=k\}}
\right)^2,
\end{equation}
while the logarithmic score is

\begin{equation}
LS
=
-
\frac{1}{T}
\sum_{t=1}^{T}
\log
\widehat b_{t,Y_t}.
\end{equation}
The logarithmic score is a strictly proper scoring rule and is uniquely minimized by the true conditional distribution \citep{dawid1982,gneiting2007}. Consequently, both measures provide objective assessments of belief quality independent of downstream actions. Belief states may also be evaluated using entropy and information gain,

\begin{equation}
H(b_t)
=
-
\sum_{k=1}^{K}
b_t(s_k)\log b_t(s_k),
\qquad
IG_t
=
H(b_{t-1})-H(b_t),
\end{equation}
which quantify the reduction in uncertainty produced by new information.

\subsection{Forecast Validation}

Forecast validation assesses whether posterior beliefs contain useful predictive information. Let \(\widehat\mu_t=f(\widehat b_t,X_t)\) denote a forecast generated from the estimated belief state. Forecast quality may be evaluated using prediction errors or, in financial applications, the information coefficient

\begin{equation}
IC
=
Corr(\widehat\mu_t,R_{t+1}),
\end{equation}
which measures the relationship between forecasts and realized outcomes.

\subsection{Policy Validation}

Policy validation evaluates the quality of decisions rather than predictions. If actions are selected according to \(A_t=\pi(\widehat b_t)\), policy quality is measured through the realized value

\begin{equation}
V^\pi
=
\mathbb E^\pi
\left[
\sum_{t=0}^{T}
\gamma^tR_t
\right].
\end{equation}
Relative to a benchmark policy \(\pi_b\), the incremental value generated by the agent is

\begin{equation}
\Delta V
=
V^\pi-V^{\pi_b}.
\end{equation}

\subsection{Utility Validation}

A policy may be internally consistent yet optimize an inappropriate objective. Utility validation therefore assesses whether the optimization objective aligns with organizational goals. This distinction becomes particularly important in autonomous systems because an agent may behave optimally with respect to a specified objective while remaining misaligned with the true decision-making objective.

\subsection{Validation Summary}

The validation framework therefore evaluates four distinct components: belief quality, forecast quality, policy quality, and utility quality. The empirical analysis implements this framework through calibration diagnostics, coverage tests, ablation studies, sensitivity analysis, and utility-based performance evaluation. Together, these diagnostics provide a structured methodology for validating autonomous decision systems beyond their final outputs.

\section{Model Risk}

The preceding section introduced a layered validation architecture. This section identifies the principal sources of model risk associated with each layer and discusses the assumptions underlying the proposed framework.

The central observation is that autonomous systems are hierarchical models. Information is transformed into beliefs, beliefs into forecasts, forecasts into actions, and actions into utility. Each transformation introduces uncertainty and therefore a distinct source of model risk.

\subsection{State-Space Risk}

The first source of model risk arises from the specification of the latent-state space. The framework assumes that the environment evolves on a finite set of hidden states

\begin{equation}
\mathcal S
=
\{
s_1,\ldots,s_K
\}.
\end{equation}

This assumption is central to the entire methodology because posterior beliefs, forecasts, and decisions are all conditioned upon the latent-state representation. If the state space fails to capture economically meaningful variation in the environment, then the resulting beliefs may be systematically misleading even when inference is otherwise accurate.

State-space risk therefore arises whenever the chosen latent states provide an incomplete or distorted representation of reality. Formally, if \(S_t^\star\) denotes the unknown true latent process and \(S_t\) denotes the model representation, then state-space error may be represented abstractly as $\varepsilon_S
=
S_t^\star-S_t$. In practice, the true latent state space is almost never observable and is generally unknown. Consequently, the state-space specification should be viewed as a modeling assumption rather than an empirical fact.

To illustrate this, suppose the true economic environment contains, say, two distinct inflationary regimes. The first corresponds to an orderly inflationary expansion characterized by strong growth and rising commodity prices. The second corresponds to stagflation, in which inflation is accompanied by deteriorating growth and weakening corporate earnings. If the latent-state model contains only a single state labelled \emph{Inflation Shock}, then these economically distinct environments are forced into the same category. The resulting posterior beliefs may therefore be internally consistent while still failing to capture important differences relevant to forecasting and decision making.

Similarly, a state labelled \emph{Crisis} may inadvertently absorb multiple forms of uncertainty, including financial crises, geopolitical shocks, recession fears, interest-rate volatility, and market stress. In such circumstances, elevated posterior probability assigned to the Crisis state may not necessarily indicate a genuine crisis. Instead, it may reflect deficiencies in the state-space representation itself. This phenomenon is observed empirically in Section 6, where the Crisis state receives persistently high probability despite relatively benign realized market outcomes.

State-space risk is closely related to model-specification risk in traditional model validation. In credit-risk models, for example, important risk drivers may be omitted from a probability-of-default model. In stress-testing frameworks, scenario definitions may fail to capture relevant economic mechanisms. In the present framework, the analogous concern is that the latent-state representation may omit important dimensions of the environment.

The latent states used in the empirical application are intentionally specified \emph{a priori} using economic intuition rather than statistical optimization. This choice improves interpretability and facilitates validation, but it also introduces the possibility of misspecification. Consequently, the latent-state representation should be viewed as a useful approximation rather than a literal description of the economy.

From a validation perspective, state-space risk is particularly important because it cannot be eliminated through improved calibration or more sophisticated forecasting models. If the underlying representation of the environment is incorrect, then all downstream components inherit that error. State-space validation should therefore be regarded as a prerequisite for belief-state validation, forecast validation, and policy validation.

\subsection{Filtering Risk}
The second source of model risk concerns state estimation. In a partially observable environment, the true latent state is not directly observable and must instead be inferred from available information. The ideal posterior belief state is

\begin{equation}
b_t
=
\mathbb P
\left(
S_t
\mid
\mathcal F_t
\right).
\end{equation}
In practice, however, the posterior distribution is generally unavailable because the information filtration may contain heterogeneous and unstructured data. Modern agentic systems frequently combine numerical observations, documents, emails, web content, retrieved knowledge, tool outputs, user interactions, market data, and intermediate reasoning traces. Constructing an explicit probabilistic model for such information is typically infeasible. The framework therefore employs an approximate filtering operator $\widehat{\Phi}_{\theta}$,
implemented using a large language model (LLM). The resulting posterior estimate is $\widehat b_t
=
\widehat{\Phi}_{\theta}
(
\mathcal F_t
)$.

The role of the LLM should be interpreted carefully. The model is not assumed to replace probability theory, Bayesian inference, or filtering theory. Rather, it serves as a flexible approximation to the conditional expectation operator that maps a complex information set into a probability distribution over latent states. Conceptually, the LLM acts as a semantic inference engine that compresses large volumes of structured and unstructured information into an interpretable posterior belief state.

This interpretation is consistent with recent developments in foundation models and autonomous agents. Modern frontier models are increasingly used as reasoning engines embedded within larger workflows that incorporate retrieval, planning, tool use, memory, and iterative decision making \citep{bommasani2021,park2023,yao2023react,shinn2023reflexion,xi2023,wang2024agents}. In such systems, the LLM does not operate in isolation. Instead, it is situated within a broader architecture in which information is gathered, processed, and transformed into actions.

A useful way to view contemporary agentic systems is through the filtration

\begin{equation}
\mathcal F_t
=
\sigma
\left(
X_t,
D_t,
T_t,
M_t
\right),
\end{equation}
where $X_t$ denotes structured observations, $D_t$ denotes retrieved documents and external knowledge, $T_t$ denotes tool outputs, and $M_t$ denotes memory or interaction history. Modern orchestration frameworks allow frontier models to access each of these information sources through retrieval-augmented generation, tool invocation, external APIs, database queries, search systems, and workflow engines \citep{park2023,yao2023react,wang2024agents}. The resulting information set is often substantially richer than the numerical state vectors traditionally encountered in filtering problems.

From a stochastic-control perspective, the LLM therefore plays a role analogous to an approximate Bayesian filter operating on an enlarged information filtration. The objective is not to predict the next observation directly, but to infer a posterior distribution over latent states that is useful for downstream decision making. The resulting filtering error is $\varepsilon_B
=
\widehat b_t
-
b_t$. This error captures the discrepancy between the inferred belief state and the ideal Bayesian posterior.

Several distinct sources contribute to filtering risk. First, the latent-state specification itself may be incomplete or misspecified. If the chosen state space fails to capture important dimensions of the environment, even a perfect inference engine cannot recover the correct posterior. Second, the information filtration may be incomplete. The agent can only condition upon information available through observations, retrieval systems, tools, and memory. Missing information necessarily limits the quality of posterior inference. Third, frontier models may exhibit reasoning errors, hallucinations, retrieval failures, or sensitivity to prompting strategies \citep{weidinger2022,bai2022constitutional,ganguli2022red,liang2022holistic}. Such failures may distort the inferred posterior distribution.Finall,y model updates and changing foundation-model behavior introduce a form of model drift. Since frontier models evolve over time, the mapping $\widehat{\Phi}_{\theta}$ may itself change, potentially altering posterior beliefs even when the underlying information set remains unchanged.

Filtering risk is particularly important because all subsequent forecasts and decisions depend upon the inferred belief state. An agent may possess an optimal forecasting model and an optimal decision policy, yet still perform poorly if the underlying beliefs are inaccurate. Consequently, belief-state validation becomes a first-class model-validation problem rather than merely a component of downstream predictive accuracy.

The calibration diagnostics, coverage tests, and ablation studies developed later in the empirical section of this paper are designed specifically to evaluate this source of model risk.

\subsection{Forecast Risk}

The third source of model risk concerns the mapping from beliefs to forecasts. Expected returns are represented by

\begin{equation}
\widehat\mu_t
=
f(\widehat b_t,X_t).
\end{equation}
The forecast error is $\varepsilon_F
=
\widehat\mu_t-\mu_t$, where \(\mu_t\) denotes the true conditional expectation. Forecast risk may remain substantial even when the inferred beliefs are well calibrated. Accurate state estimation does not guarantee accurate forecasting.

\subsection{Policy Risk}

The fourth source of model risk concerns decision making. The agent selects actions according to

\begin{equation}
A_t
=
\pi(\widehat b_t).
\end{equation}
The resulting policy may fail to maximize utility because of optimization constraints, estimation error, implementation choices, or deficiencies in the portfolio-construction methodology. If \(A_t^\star\) denotes the optimal action under the assumed utility function, policy error may be represented as

\begin{equation}
\varepsilon_P
=
A_t-A_t^\star.
\end{equation}

\subsection{Utility-Specification Risk}

A final source of model risk concerns the objective function itself. The framework assumes that decisions are evaluated according to a specified utility function \(\widehat U\). The true organizational objective, however, may differ from the objective embedded within the model. The utility-specification error is therefore $\varepsilon_U
=
U^\star-\widehat U$. Examples include regulatory objectives, reputational concerns, operational constraints, and risk tolerances that are not fully captured by the model.

\subsection{Parameter Risk}

The empirical implementation requires specification of parameters governing risk aversion, prior shrinkage, forecast weighting, and portfolio construction. Although these parameters are economically motivated, they remain uncertain. Model outputs may therefore depend upon parameter selection. To assess robustness, the empirical study performs sensitivity analysis across a range of plausible parameter values. The objective is not to identify a unique optimal specification but rather to determine whether conclusions remain stable under reasonable perturbations.

\subsection{Aggregate Model Risk}

The preceding discussion suggests that model risk in autonomous systems is inherently multi-dimensional. An aggregate representation may be written as

\begin{equation}
\varepsilon_{\mathrm{total}}
=
\varepsilon_S
+
\varepsilon_B
+
\varepsilon_F
+
\varepsilon_P
+
\varepsilon_U.
\end{equation}
This decomposition provides a practical framework for challenge, monitoring, stress testing, and governance. More importantly, it highlights that failures of autonomous systems need not originate from a single source. An apparently poor decision may arise from inaccurate beliefs, flawed forecasts, an inappropriate policy, or a misaligned objective function.

The portfolio-management case study presented in the following sections illustrates how these sources of model risk can be measured and evaluated in practice.

\section{Portfolio Management Case Study}

This section demonstrates the proposed validation framework in a portfolio-management setting. The objective is not to introduce a new portfolio-construction methodology, but rather to provide a realistic environment in which belief-state validation, forecast validation, policy validation, and utility validation can be studied simultaneously.

Portfolio management is particularly suitable for this purpose because decisions must be made under uncertainty using incomplete information regarding future economic conditions. The resulting problem is naturally consistent with the partially observable decision framework developed in the previous sections.

\subsection{Latent State Specification}

The empirical implementation employs the latent-state representation

\begin{equation}
\mathcal S
=
\{
\mathrm{AI\ Boom},
\mathrm{Soft\ Landing},
\mathrm{Inflation\ Shock},
\mathrm{Recession},
\mathrm{Crisis}
\}.
\end{equation}
These states were specified \emph{a priori} using economic intuition rather than statistical estimation. The objective is not to provide a complete representation of the economy but to construct an interpretable latent-state model that can be subjected to validation.

The states are intended to capture broad economic and market environments associated with growth, inflation, recession, financial stress, and technology-driven expansion. As discussed in Section 4, the resulting representation should be viewed as an approximation and is therefore subject to state-space risk.

\subsection{Information Set and State Inference}

The information filtration is constructed from both market and macroeconomic variables. Market variables include returns, volatility measures, momentum indicators, and cross-asset relationships. The macroeconomic information set includes measures of market volatility, commodity prices, and credit conditions. Posterior beliefs are estimated using the filtering framework introduced previously. In the baseline implementation, a large language model acts as an approximate filtering operator that maps the available information set into a posterior distribution over latent states. The resulting belief vector provides the input to the forecasting and decision-making layers.

The purpose of the macroeconomic variables is not to forecast returns directly. Rather, they improve observability of the latent state process and therefore contribute indirectly through the belief-state estimation procedure.

The practical implementation of the filtering step requires a mechanism for transforming the information filtration into a posterior distribution over latent states. The resulting query structure is described next.

\subsection{LLM Prompt Structure and State-Inference Query}

The posterior belief state is obtained by presenting the available information set to a frontier large language model and requesting a probability distribution over the latent-state space. The objective is not to predict future returns directly but rather to estimate the current hidden state of the environment.

At each decision date, the model receives a structured summary containing recent market behavior, macroeconomic information, and the previous belief state. The information supplied to the model includes recent asset returns, volatility measures, momentum indicators, market-implied volatility, commodity-price dynamics, credit-market conditions, and other variables contained in the filtration \(\mathcal F_t\).

The query is formulated as a state-estimation problem rather than a forecasting problem. Specifically, the model is instructed to infer the probability that the current environment belongs to each state in $\mathcal S$. The model is required to return a valid probability distribution

\begin{equation}
\widehat b_t
=
(\widehat b_{t,1},\ldots,\widehat b_{t,K}),
\end{equation}

satisfying

\begin{equation}
\widehat b_{t,k}\ge0,
\qquad
\sum_{k=1}^{K}
\widehat b_{t,k}
=
1.
\end{equation}
A simplified version of the prompt structure may be represented schematically as

\begin{equation}
\text{Market Features}
+
\text{Macro Features}
+
\text{Previous Beliefs}
\rightarrow
\text{LLM}
\rightarrow
\widehat b_t.
\end{equation}

The previous belief state is supplied to encourage temporal consistency and to provide the model with a probabilistic summary of the preceding environment. Consequently, the resulting inference process resembles a recursive filtering procedure rather than a sequence of independent classifications.

Importantly, the model is not asked to recommend trades or portfolio allocations at this stage. The purpose of the query is solely to estimate the latent state of the environment. This separation is critical from a validation perspective because it allows the quality of the inferred beliefs to be evaluated independently of subsequent forecasts and decisions.

A representative prompt sent to the Open AI API (aka ChatGPT) may be summarized as follows:

\begin{quote}
\it{Given the following market observations, macroeconomic indicators, and previous posterior beliefs, estimate the probability that the current environment corresponds to each latent state. Return probabilities for AI Boom, Soft Landing, Inflation Shock, Recession, and Crisis. Ensure that probabilities sum to one and provide a brief rationale for the allocation.}
\end{quote}

The resulting probability distribution becomes the input to the forecasting and portfolio-construction layers described below. Consequently, the empirical analysis can separately evaluate the quality of the inferred beliefs, the forecasts generated from those beliefs, and the decisions ultimately produced by the agent.

\subsection{Belief-Conditioned Forecasting}

Expected returns are generated conditional on the inferred belief state. Let

\begin{equation}
\widehat\mu_t
=
f(\widehat b_t,X_t)
\end{equation}
denote the forecast vector produced by the model. Unlike purely historical forecasting approaches, expected returns are allowed to vary as posterior beliefs evolve through time. Consequently, changes in the inferred latent state induce corresponding changes in forecasted asset returns. This construction allows the empirical study to evaluate whether latent-state information contributes incremental forecasting value beyond information already contained in observable market variables.

\subsection{Portfolio Construction}

Forecasts are incorporated into a Bayesian portfolio-construction framework based on the Black--Litterman methodology \citep{blacklitterman1992}. Equilibrium expected returns are combined with belief-conditioned views to obtain posterior expected returns $\mu_{BL}$. Portfolio weights are then obtained by solving a constrained mean--variance optimization problem,

\begin{equation}
w^\star
=
\arg\max_w
\left(
w^\top\mu_{BL}
-
\lambda
w^\top\Sigma w
\right),
\end{equation}
subject to portfolio constraints. The resulting allocation represents the action selected by the agent.

\subsection{Validation Objectives}

The portfolio application provides a concrete environment in which each component of the validation framework can be evaluated. Belief-state validation examines whether the inferred posterior probabilities are calibrated and informative. Forecast validation evaluates whether latent-state information improves expected-return estimates relative to simpler forecasting approaches. Policy validation assesses whether the resulting portfolio decisions improve risk-adjusted performance relative to benchmark strategies and finally, utility validation evaluates whether the complete decision process improves the agent's objective function. The empirical results presented in the next section combine calibration diagnostics, coverage tests, ablation studies, sensitivity analysis, and portfolio-performance evaluation to assess each of these layers independently.

\section{Empirical Validation Results}

This section evaluates the validation framework developed in Sections 2--5. The objective is not merely to compare portfolio strategies, but to determine whether latent-state inference produces measurable improvements in belief quality, forecasting performance, and decision quality. The empirical analysis therefore focuses on four questions:

\begin{enumerate}
\item Are the inferred belief states informative and reasonably calibrated?
\item Do latent-state beliefs improve forecasts relative to simpler models?
\item Do belief-conditioned decisions improve utility relative to benchmark strategies?
\item Are the resulting conclusions robust to alternative model specifications?
\end{enumerate}

Collectively, these questions correspond to the belief, forecast, policy, and utility layers of the validation architecture introduced earlier.

\subsection{Experimental Design}

Historical market data were obtained from the Massive.com API over the period
\(10~\mathrm{June}~2024\) to \(12~\mathrm{June}~2026\). The empirical study employs a deliberately simple asset universe consisting of

\begin{equation}
\{
\mathrm{AAPL},
\mathrm{MSFT},
\mathrm{GOOGL},
\mathrm{NVDA},
\mathrm{AMZN},
\mathrm{JPM},
\mathrm{IBM},
\mathrm{GLD},
\mathrm{TLT}
\},
\end{equation}
with SPY serving as the benchmark market portfolio. The objective is not to optimize a production investment strategy but to provide a controlled environment in which the proposed validation framework can be evaluated.

The complete Forecasting POMDP framework is compared with five benchmark portfolio-construction methodologies:

\begin{enumerate}
\item Equal Weight;
\item Risk Parity;
\item Maximum Sharpe;
\item 60/40 SPY--TLT;
\item POMDP Utility.
\end{enumerate}
The Forecasting POMDP corresponds to the full architecture described in Section 5, combining latent-state inference, belief-conditioned forecasting, Bayesian portfolio construction, and utility-based optimization.

\subsection{Parameter Specification}

The parameter values used throughout the study are reported in Table \ref{tab:parameters}.

\begin{table}[H]
\centering
\caption{Experimental parameter values.}
\label{tab:parameters}
\begin{tabular}{lr}
\toprule
Parameter & Value \\
\midrule
Lookback Window & 252 Trading Days \\
Holding Period & 21 Trading Days \\
Rebalancing Frequency & Monthly \\
Risk Aversion $\lambda$ & 3.50 \\
Maximum Asset Weight & 0.35 \\
Prior Shrinkage $\rho$ & 0.25 \\
View Weight $\alpha$ & 0.65 \\
Portfolio Constraint & Long Only \\
Budget Constraint & Fully Invested \\
\bottomrule
\end{tabular}
\end{table}
The lookback window corresponds approximately to one year of trading history, while the holding period corresponds to approximately one month. The risk-aversion coefficient
$\lambda=3.5$ was selected to produce volatility levels broadly consistent with institutional balanced portfolios. The prior-shrinkage parameter $\rho=0.25$ reflects the well-known instability of historical return estimates. The view-weight parameter $\alpha=0.65$ assigns substantial influence to latent-state forecasts while preserving a meaningful contribution from historical information. These parameter choices are varied later in this section to assess robustness of the validation methods under perturbations.

\subsection{Portfolio Performance Validation}

Table \ref{tab:performance} reports the primary performance statistics.

\begin{table}[htbp]
\caption{Performance comparison of POMDP portfolio agent and benchmarks.}
\label{tab:performance}
\centering
\resizebox{\textwidth}{!}{
\begin{tabular}{lrrrrrrr}
\toprule
Strategy & CAGR & Volatility & Sharpe & Sortino & Max Drawdown & Calmar & Terminal Wealth \\
\midrule
Forecasting POMDP & 0.1587 & 0.1092 & 1.6811 & 1.3514 & -0.0782 & 2.0296 & 1.2320 \\
POMDP Utility & 0.1818 & 0.1234 & 1.4977 & 1.7901 & -0.1011 & 1.7993 & 1.2670 \\
Equal Weight & 0.2307 & 0.1754 & 1.2587 & 2.4743 & -0.1180 & 1.9545 & 1.3419 \\
Risk Parity & 0.2307 & 0.1754 & 1.2587 & 2.4743 & -0.1180 & 1.9545 & 1.3419 \\
60 40 SPY/ TLT & 0.1145 & 0.1380 & 0.9494 & 1.7781 & -0.1065 & 1.0756 & 1.1660 \\
Max. Sharpe & 0.1013 & 0.1512 & 0.8698 & 1.0295 & -0.1389 & 0.7296 & 1.1465 \\
\bottomrule
\end{tabular}
}
\end{table}

Several observations emerge immediately. The Equal Weight portfolio achieves the highest compound annual growth rate. This result is largely attributable to the strong performance of technology equities during the sample period. However, the objective of the proposed framework is not to maximize raw return. The objective is to improve decision quality under uncertainty. The Forecasting POMDP achieves the highest Sharpe ratio, the highest Calmar ratio, and the smallest maximum drawdown among the strategies considered. These results suggest that latent-state inference improves the efficiency with which risk is transformed into return.

From a validation perspective, Table \ref{tab:performance} evaluates the policy layer of the framework. The table demonstrates that the actions generated by the agent produce superior risk-adjusted outcomes relative to benchmark decision rules.

\subsection{Utility Validation}

The most important validation criterion is utility. Table \ref{tab:utility} reports mean--variance utility values.

\begin{table}[htbp]
\caption{Mean-variance utility comparison of \newline the POMDP portfolio agent and benchmarks.}
\label{tab:utility}
\centering
\begin{tabular}{lr}
\toprule
Strategy & Mean Variance Utility \\
\midrule
Forecasting POMDP & 0.1418 \\
POMDP Utility & 0.1315 \\
Equal Weight & 0.1131 \\
Risk Parity & 0.1131 \\
60 40 SPY/TLT & 0.0644 \\
Max. Sharpe & 0.0515 \\
\bottomrule
\end{tabular}
\end{table}

The Forecasting POMDP achieves the highest utility among all strategies. This result is significant because utility corresponds directly to the optimization objective of a rational decision maker. The findings therefore support the central thesis of the paper: autonomous systems should be evaluated according to decision quality rather than prediction accuracy alone.

\subsection{Belief-State Validation}

Figure \ref{fig:belief} illustrates the evolution of posterior beliefs through time.

\begin{figure}[H]
\centering
\includegraphics[width=.9\textwidth]{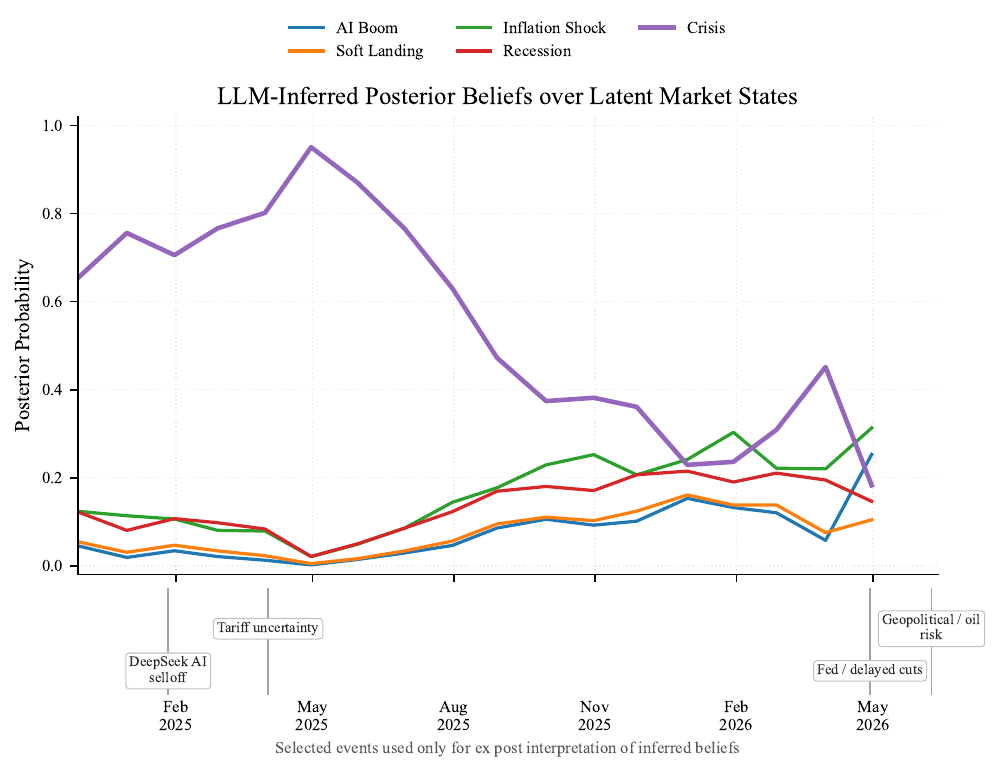}
\caption{Posterior belief states inferred by the filtering model.}
\label{fig:belief}
\end{figure}
The figure reveals substantial variation in posterior beliefs over the sample period. The Crisis state frequently receives elevated probability, indicating that the filtering model interprets volatility, credit conditions, and macroeconomic uncertainty as evidence of elevated systemic risk. From a validation perspective, the figure is important because it exposes the internal state of the agent. Traditional portfolio strategies reveal only final allocations. The present framework reveals the intermediate belief states responsible for those decisions. Consequently, failures of inference can be distinguished from failures of decision making.

\subsection{Belief Calibration Diagnostics}

A central contribution of the paper is the treatment of posterior beliefs as first-class validation objects. Table \ref{tab:belief_calibration} reports belief calibration diagnostics.

\begin{table}[H]
\caption{Belief calibration diagnostics using ex post proxy latent-state labels. The proxy labels are constructed from realized forward returns and are used only for validation diagnostics; they are not observed by the agent.}
\label{tab:belief_calibration}
\centering
\begin{tabular}{lrrrrr}
\toprule
State & Average Belief & Empirical Frequency & Calibration Gap & Brier Score & Log Score \\
\midrule
AI Boom & 0.074 & 0.278 & -0.204 & 0.255 & 0.884 \\
Soft Landing & 0.075 & 0.222 & -0.147 & 0.202 & 0.828 \\
Inflation Shock & 0.165 & 0.167 & -0.002 & 0.142 & 0.448 \\
Recession & 0.136 & 0.000 & 0.136 & 0.022 & 0.149 \\
Crisis & 0.550 & 0.333 & 0.216 & 0.336 & 0.944 \\
\bottomrule
\end{tabular}
\end{table}

Several observations emerge. The Inflation Shock state exhibits almost perfect calibration, with an average posterior probability of 0.165 and an empirical frequency of 0.167. This result suggests that the filtering framework is capable of identifying inflation-related environments with reasonable accuracy.

The Crisis state is systematically overestimated. The average posterior probability of 0.550 exceeds the empirical frequency of 0.333, producing a calibration gap of 0.216. This result is consistent with the belief trajectories shown in Figure \ref{fig:belief}, where the Crisis state appears to absorb multiple forms of uncertainty. Conversely, the AI Boom and Soft Landing states are systematically underrepresented. This finding suggests that the current latent-state specification may be overly conservative.

Because the latent states are not observable, calibration diagnostics therefore rely upon ex post proxy state labels derived from realized forward returns. The resulting statistics should be interpreted as approximate belief-validation measures rather than exact measures of posterior accuracy.

\subsection{Policy Validation Through Drawdowns}

Figure \ref{fig:drawdowns} reports portfolio drawdowns.

\begin{figure}[H]
\centering
\includegraphics[width=.9\textwidth]{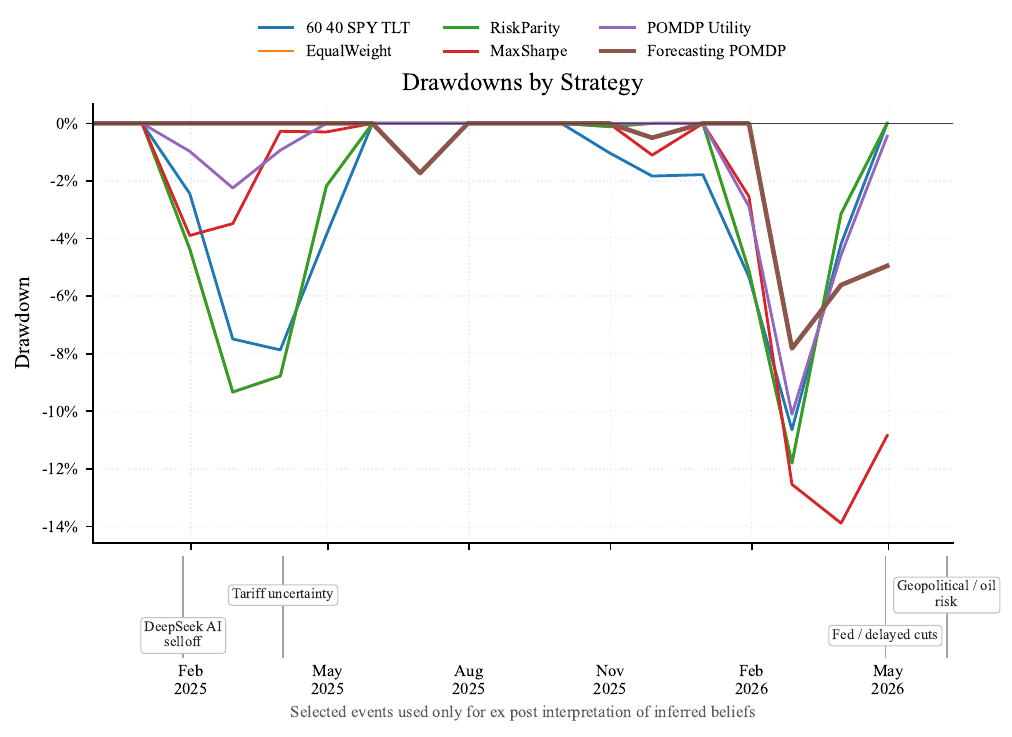}
\caption{Portfolio drawdowns.}
\label{fig:drawdowns}
\end{figure}
The Forecasting POMDP exhibits the smallest drawdown among all strategies considered. This result suggests that the latent-state framework contributes useful information regarding downside risk and adverse market environments. The figure therefore provides validation evidence for the policy layer of the framework.

\subsection{Policy Validation Through Wealth Evolution}

Figure \ref{fig:wealth} reports cumulative wealth trajectories.

\begin{figure}[H]
\centering
\includegraphics[width=.9\textwidth]{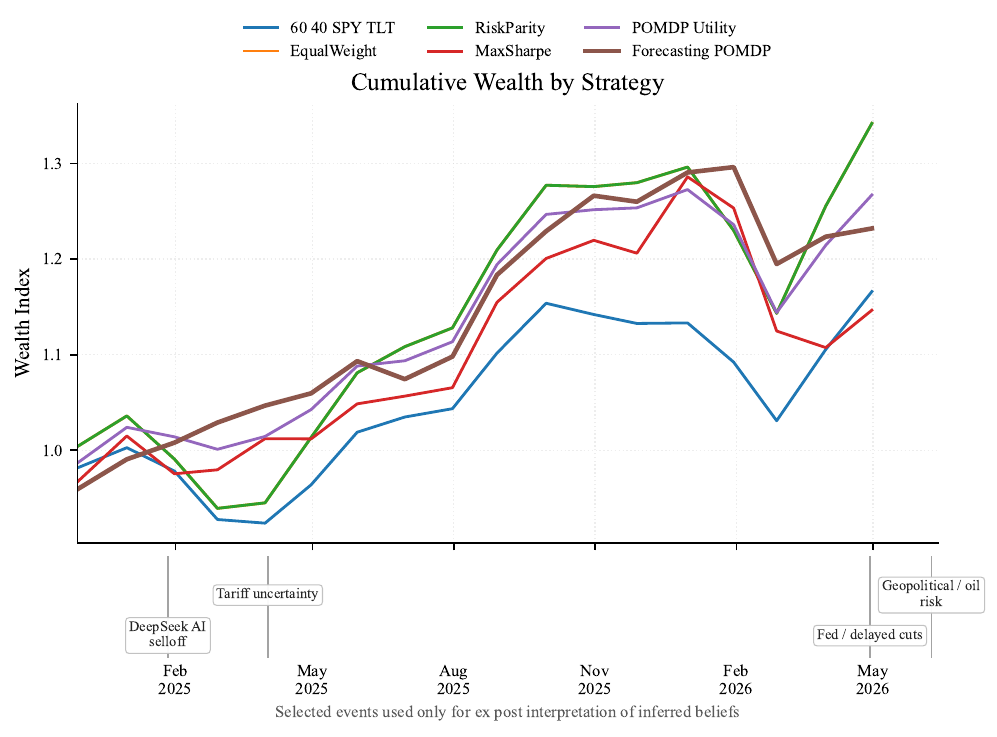}
\caption{Cumulative wealth trajectories.}
\label{fig:wealth}
\end{figure}
Although the equal weight portfolio achieves the highest terminal wealth, the Forecasting POMDP remains competitive while producing substantially superior risk-adjusted performance. This distinction illustrates the difference between maximizing return and maximizing utility.

\subsection{Ablation Study}

Ablation analysis provides one of the most important validation tests in the paper. Table \ref{tab:ablation} reports the performance of several reduced versions of the framework.

\begin{table}[H]
\caption{Ablation study showing the contribution of macro variables, LLM-inferred beliefs, and the full Forecasting POMDP framework.}
\label{tab:ablation}
\centering
\begin{tabular}{lrr}
\toprule
Model & Sharpe & Utility \\
\midrule
Historical Only & 1.427 & 0.113 \\
Market Only & 1.598 & 0.133 \\
Market + Direct Macro & 1.459 & 0.122 \\
Market + Beliefs & 1.681 & 0.142 \\
Full POMDP: Macro-Inferred Beliefs & 1.681 & 0.142 \\
\bottomrule
\end{tabular}
\end{table}

The results support several conclusions. First, market-based forecasting signals improve performance relative to historical-return estimates alone. Second, direct incorporation of macroeconomic variables into the forecasting model reduces performance, suggesting that macro variables are difficult to translate directly into expected-return forecasts. Third, latent-state beliefs improve both Sharpe ratio and utility relative to market-only forecasts. Most importantly, the Full POMDP and Market + Beliefs models outperform all alternatives. This result suggests that macroeconomic information is most useful when employed to improve latent-state inference rather than when introduced directly into the forecasting model. From a validation perspective, the ablation study provides evidence that the belief-state layer contributes independently to decision quality and is not merely duplicating information already contained in observable market variables.

\subsection{Sensitivity Analysis}

Table \ref{tab:sensitivity} reports parameter-sensitivity results.

\begin{table}[H]
\caption{Parameter sensitivity analysis for the Forecasting POMDP. Each row varies one parameter while holding all other parameters fixed at their base values.}
\label{tab:sensitivity}
\centering
\begin{tabular}{lllrrr}
\toprule
Parameter & Value & Base Value & Sharpe & Utility & Max Drawdown \\
\midrule
BASE MODEL &  &  & 1.681 & 0.142 & -0.078 \\
risk aversion & 2.000 & 3.500 & 1.749 & 0.152 & -0.079 \\
risk aversion & 3.500 & 3.500 & 1.681 & 0.142 & -0.078 \\
risk aversion & 5.000 & 3.500 & 1.643 & 0.137 & -0.078 \\
prior shrinkage & 0.100 & 0.250 & 1.692 & 0.143 & -0.078 \\
prior shrinkage & 0.250 & 0.250 & 1.681 & 0.142 & -0.078 \\
prior shrinkage & 0.500 & 0.250 & 1.663 & 0.140 & -0.079 \\
view weight & 0.400 & 0.650 & 1.582 & 0.128 & -0.077 \\
view weight & 0.650 & 0.650 & 1.681 & 0.142 & -0.078 \\
view weight & 0.800 & 0.650 & 1.743 & 0.151 & -0.079 \\
\bottomrule
\end{tabular}
\end{table}

The results indicate that the principal conclusions remain stable across a broad range of parameter values. Increasing the view weight improves both Sharpe ratio and utility, providing additional evidence that latent-state forecasts contribute useful information. Also, increasing prior shrinkage slightly reduces performance, suggesting that latent-state information contains signal not fully reflected in historical returns. We also observed that increasing risk aversion produces a gradual reduction in utility and Sharpe ratio, as expected. Importantly, performance remains relatively stable across all parameter configurations examined. This result reduces concerns regarding parameter overfitting and supports the robustness of the framework.

\begin{table}[H]
\caption{Kupiec-style unconditional coverage test for inferred belief states using ex post proxy latent-state labels. The null hypothesis is that the average inferred belief probability equals the empirical proxy-state frequency.}
\label{tab:belief_coverage}
\resizebox{\textwidth}{!}{
\begin{tabular}{lrrrrrrrl}
\toprule
State & Observations & Observed Count & Expected Count & Observed Rate & Expected Rate & Coverage Statistic & p-value & Reject 5\% \\
\midrule
AI Boom & 18 & 5 & 1.327 & 0.278 & 0.074 & 10.972 & 0.001 & Yes \\
Soft Landing & 18 & 4 & 1.351 & 0.222 & 0.075 & 5.616 & 0.018 & Yes \\
Inflation Shock & 18 & 3 & 2.969 & 0.167 & 0.165 & 0.000 & 0.984 & No \\
Recession & 18 & 0 & 2.457 & 0.000 & 0.136 & 2.845 & 0.092 & No \\
Crisis & 18 & 6 & 9.896 & 0.333 & 0.550 & 3.408 & 0.065 & No \\
\bottomrule
\end{tabular}
}
\end{table}

Table \ref{tab:belief_coverage} reports Kupiec-style coverage diagnostics for the inferred belief states using ex post proxy latent-state labels. The Inflation Shock state exhibits excellent coverage properties, while the AI Boom and Soft Landing states are systematically underrepresented. Conversely, the Crisis state is overrepresented, consistent with the belief-state trajectories observed in Figure \ref{fig:belief}. These findings suggest that the filtering model tends to interpret uncertainty conservatively, assigning excessive probability to adverse states while underestimating growth-oriented regimes. Because the latent states are not directly observable and the sample contains only eighteen independent decision periods, the results should be interpreted as exploratory validation diagnostics rather than definitive calibration tests.

\subsection{Validation Assumptions}
The empirical validation procedures rely upon several assumptions. First, the latent states are not directly observable. Consequently, belief calibration cannot be evaluated against the true hidden state. Instead, calibration and coverage diagnostics employ ex post proxy state labels constructed from realized market outcomes. These diagnostics should therefore be interpreted as approximate measures of belief quality rather than exact tests of posterior correctness.

Second, the portfolio-management application is intended as a proof-of-concept validation environment rather than a production investment strategy. The objective is to evaluate the validation framework itself rather than to maximize investment performance. Third, the ablation study assumes that improvements in utility can be attributed to the incremental addition of information and model components. While this approach is commonly used in machine learning and model validation, the resulting comparisons should be interpreted as evidence of contribution rather than definitive causal attribution. Fourth, the sensitivity analysis assumes that robustness to parameter perturbations provides evidence against excessive dependence on a particular calibration. The objective is not to identify a unique optimal parameter specification but rather to assess whether the principal conclusions remain stable under reasonable changes in model assumptions.

Finally, all empirical results are obtained from a single application domain and a relatively limited sample period. Consequently, the results should be interpreted as validation evidence supporting the proposed framework rather than proof of universal applicability.

\subsection{Validation Interpretation}

Taken together, the empirical results support the central thesis of the paper. Figure \ref{fig:belief} validates the belief layer by exposing the posterior state estimates used by the agent. Table \ref{tab:belief_calibration} provides direct evidence regarding the quality of those beliefs. Table \ref{tab:ablation} validates the forecasting layer by demonstrating that latent-state information contributes independently to decision quality. Figures \ref{fig:drawdowns} and \ref{fig:wealth} validate the policy layer by evaluating the consequences of the resulting decisions. Finally, Tables \ref{tab:performance} and \ref{tab:utility} validate the utility layer by measuring realized decision quality.

The principal contribution of the empirical study is therefore not the portfolio strategy itself. Rather, it demonstrates how an autonomous decision system can be decomposed into beliefs, forecasts, actions, and utility, with each component subjected to independent validation. This decomposition provides substantially greater transparency than evaluating agentic AI systems solely through their final outputs.

\section{Conclusion}

This paper has proposed a model validation framework for agentic AI based on the theory of Partially Observable Markov Decision Processes. The central premise is that autonomous systems should not be evaluated solely through the quality of their outputs. Instead, validation should focus on the complete decision-making process, including information acquisition, belief formation, forecasting, action selection, and realized utility.

The POMDP framework provides a natural mathematical representation of this process. By introducing latent states, posterior beliefs, and utility-based decision making, the framework extends traditional model validation concepts beyond prediction accuracy and toward the validation of autonomous decision systems operating under uncertainty.

A key contribution of the paper is the introduction of a layered validation architecture. Belief states are evaluated through calibration and scoring rules. Forecasts are evaluated through predictive performance. Policies are evaluated through realized decisions and risk-adjusted outcomes. Utility functions are evaluated through their ability to achieve organizational objectives. This decomposition enables model validators to identify whether failures originate from state-space specification, filtering, forecasting, policy construction, or utility design.

The paper also develops an explicit model-risk taxonomy for agentic AI. State-space risk, filtering risk, forecast risk, policy risk, utility-specification risk, and parameter risk are identified as distinct sources of model uncertainty. This taxonomy provides a practical framework for challenge, governance, monitoring, and ongoing validation of autonomous systems.

The empirical portfolio-management application serves as a proof-of-concept implementation of the framework. The results demonstrate that latent-state beliefs improve decision quality relative to traditional benchmark strategies. The ablation study provides evidence that belief-state inference contributes independently to performance, while the sensitivity analysis indicates that the principal conclusions are robust to reasonable perturbations of model parameters. The calibration diagnostics further demonstrate how posterior beliefs themselves may become objects of validation.

Perhaps the most important finding is methodological rather than empirical. The experiments illustrate that agentic AI systems can be decomposed into a sequence of interpretable transformations,

\begin{equation}
\mathrm{Observations}
\rightarrow
\mathrm{Beliefs}
\rightarrow
\mathrm{Forecasts}
\rightarrow
\mathrm{Actions}
\rightarrow
\mathrm{Utility},
\end{equation}

and that each stage can be validated independently. This capability provides a degree of transparency that is generally unavailable in traditional black-box evaluations.

The framework proposed here is intentionally general. Although the empirical study focuses on portfolio management, the methodology is equally applicable to enterprise copilots, autonomous research systems, cyber-defense agents, procurement systems, and other forms of agentic AI. In each case, the fundamental challenge remains the same: decisions must be made under uncertainty using incomplete information.

Several directions for future research remain. First, richer latent-state models could be learned directly from data rather than specified \emph{a priori}. Second, more sophisticated calibration procedures could be developed for settings in which latent states are not directly observable. Third, the framework could be extended to multi-agent systems, hierarchical agents, and environments with strategic interaction. Finally, the relationship between belief-state validation and emerging AI governance requirements warrants further investigation.

More broadly, the results suggest that quantitative finance, Bayesian decision theory, and stochastic control provide a mature mathematical foundation for the validation of autonomous AI systems. As organizations increasingly deploy agentic AI in high-consequence environments, the central question is unlikely to be whether an agent occasionally produces the correct answer. The more important question is whether the agent forms calibrated beliefs, uses information appropriately, makes defensible decisions, and improves expected utility. The framework developed in this paper provides a formal methodology for addressing that question.

\appendix

\section{Proofs}

\subsection{Belief-State Sufficiency}
\begin{proof}
For any measurable function $f$ of future states and observations,

\begin{equation}
\mathbb E
\left[
f(S_{t+1},S_{t+2},\ldots)
\mid
H_t
\right]
=
\int
f
\,dP(\cdot\mid H_t).
\end{equation}
By Bayes' rule and the Markov property,

\begin{equation}
P(S_{t+1}=s'\mid H_t)
=
\sum_{s\in\mathcal S}
T(s,s')
b_t(s).
\end{equation}
Thus the conditional distribution of all future states depends on $H_t$ only through $b_t$. Consequently, two histories that induce the same posterior belief state generate identical distributions for all future rewards. The Bellman recursion may therefore be written on the belief space:

\begin{equation}
V(b)
=
\max_{a\in\mathcal A}
\left\{
R(b,a)
+
\gamma
\sum_{o\in\mathcal O}
P(o\mid b,a)
V(\tau(b,a,o))
\right\},
\end{equation}
where $\tau$ denotes the Bayesian belief-update operator. Since the value function depends only on $b$, an optimal policy may also be expressed solely as a function of $b$. Hence the posterior belief state is a sufficient statistic.

\end{proof}

\subsection{Non-negativity of Information Value}
\begin{proof}

Since $\mathcal I_1\subseteq\mathcal I_2$, every policy adapted to $\mathcal I_1$ is also adapted to $\mathcal I_2$. Therefore

\begin{equation}
\Pi(\mathcal I_1)
\subseteq
\Pi(\mathcal I_2).
\end{equation}
Taking the supremum of the same objective function over a larger feasible set cannot decrease the optimum. Hence

\begin{equation}
\sup_{\pi\in\Pi(\mathcal I_2)}
J(\pi)
\ge
\sup_{\pi\in\Pi(\mathcal I_1)}
J(\pi), ~\textrm{where}~J(\pi)
=
\mathbb E^\pi
\left[
\sum_{t=0}^{T}
\gamma^tR_t
\right].
\end{equation}
The result follows immediately.
\end{proof}

\end{document}